\numberwithin{equation}{section}
\def\HH{\mathcal{H}}
\def\FF{\mathcal{F}}
\def\AA{\mathcal{A}}
\theoremstyle{definition}
\newtheorem*{rem}{Remark}}
\begin{document}
\allowdisplaybreaks

\newcommand{\arXivNumber}{1904.03993}

\renewcommand{\PaperNumber}{082}

\FirstPageHeading

\ShortArticleName{One Parameter Family of Jordanian Twists}

\ArticleName{One Parameter Family of Jordanian Twists}

\AuthorNameForHeading{D.~Meljanac, S.~Meljanac, Z.~\v{S}koda and R.~\v{S}trajn}

\Author{Daniel MELJANAC~$^{\dag^1}$, Stjepan MELJANAC~$^{\dag^2}$, Zoran \v{S}KODA~$^{\dag^3}$ and Rina \v{S}TRAJN~$^{^{\dag^4}}$}

\Address{$^{\dag^1}$~Division of Materials Physics, Institute Rudjer Bo\v{s}kovi\'c,\\
\hphantom{$^{\dag^1}$}~Bijeni\v{c}ka cesta~54, P.O.~Box 180, HR-10002 Zagreb, Croatia}
\EmailDD{\href{mailto:Daniel.Meljanac@irb.hr}{Daniel.Meljanac@irb.hr}}

\Address{$^{\dag^2}$~Theoretical Physics Division, Institute Rudjer Bo\v{s}kovi\'c,\\
\hphantom{$^{\dag^2}$}~Bijeni\v{c}ka cesta~54, P.O.~Box 180, HR-10002 Zagreb, Croatia}
\EmailDD{\href{mailto:meljanac@irb.hr}{meljanac@irb.hr}}

\Address{$^{\dag^3}$~Department of Teachers' Education, University of Zadar,\\
\hphantom{$^{\dag^3}$}~Franje Tudjmana 24, 23000 Zadar, Croatia}
\EmailDD{\href{mailto:zskoda@unizd.hr}{zskoda@unizd.hr}}

\Address{$^{\dag^4}$~Department of Electrical Engineering and Computing, University of Dubrovnik,\\
\hphantom{$^{\dag^4}$}~\'Cira Cari\'ca 4, 20000 Dubrovnik, Croatia}
\EmailDD{\href{mailto:rina.strajn@unidu.hr}{rina.strajn@unidu.hr}}

\ArticleDates{Received April 16, 2019, in final form October 19, 2019; Published online October 25, 2019}

\Abstract{We propose an explicit generalization of the Jordanian twist proposed in $r$-symmetrized form by Giaquinto and Zhang. It is proved that this generalization satisfies the 2-cocycle condition. We present explicit formulas for the corresponding star product and twisted coproduct. Finally, we show that our generalization coincides with the twist obtained from the simple Jordanian twist by twisting by a 1-cochain.}

\Keywords{noncommutative geometry; Jordanian twist}
\Classification{53D55; 16T05}

\section{Introduction}

Drinfeld twists of Hopf algebras~\cite{drinfeld} provide a systematic way of producing new examples in noncommutative geometry. Given a Hopf algebra $\HH$ with a coproduct $\Delta_0$, a counit $\epsilon_0$, and an element $\FF\in \HH\otimes\HH$ satisfying a 2-cocycle condition and a normalization (counitality) condition~\cite{chari, etingofKazh,etingofSchiff,majid}, often called a Drinfeld twist, one defines a new coproduct, $\Delta^{\mathcal{F}}(-) = \mathcal{F}^{-1}\Delta_0(-)\mathcal{F}$, which is coassociative due to the 2-cocycle condition. Moreover, $\HH$~as an algebra, together with the new coproduct $\Delta^{\mathcal{F}}$ becomes a new {\em twisted} Hopf algebra $\HH^{\mathcal{F}}$. Along with a Hopf algebra, many associated constructions like its representations, comodules, module algebras and so on, are twisted as well, using standard formulas involving the twist~$\mathcal{F}$. The systematic nature of the twisting procedure makes it suitable for finding new physical models with the Hopf algebra covariance built in.

In 1989, a new construction of a deformation 2-cocycle is proposed by Coll, Gerstenhaber and Giaquinto in~\cite{CollGerstGiaq1989}. Their construction starts with a $k$-algebra $A$ with multiplication $\mu_A$ and two derivations $\phi,\psi\colon A\to A$ satisfying $[\phi,\psi] = \lambda\psi$ for some $\lambda\in k$. The action of the 2-dimensional Lie algebra $L$ generated by $\phi$ and $\psi$ on $A$ by derivations extends to a unique action $\triangleright$ of the Hopf algebra $U(L)$ on $A$ making it a $U(L)$-module algebra. They prove~\cite{CollGerstGiaq1989,CollGGGrabeVlassov98} that $\mu_A\circ(\phi\otimes\psi)$ is a~Hochschild 2-cocycle which may be integrated to yield a formal deformation of $A$. The deformed multiplication on $A[[t]]$ is given by $\mu_A\circ(1\otimes 1+t\phi\otimes 1)^{1\otimes\psi}\!\circ(\triangleright\otimes\triangleright)$ for $\lambda=1$. This formula involves the element
\begin{gather*}
 (1\otimes 1+t\phi\otimes 1)^{1\otimes\psi} = \sum_{n = 0}^\infty t^n \phi^n\otimes \binom{\psi}{n} \\
 \hphantom{(1\otimes 1+t\phi\otimes 1)^{1\otimes\psi}}{} = \sum_{n=0}^\infty \frac{t^n}{n!}\phi^n\otimes\psi(\psi-1)\cdots(\psi-n+1)\in(U(L)\otimes U(L))[[t]],
\end{gather*}
which is a (Drinfeld) 2-cocycle twist for the Hopf algebra $U(L)[[t]]$. They provide several examples. Their construction is reanalyzed in detail in~\cite{CollGGGrabeVlassov98} and the 2-cocycle twist has been rediscovered in~\cite{Og}. Algebras obtained by variants of their method are now often called Jordanian deformations. Most studied examples are a Jordanian deformation of the universal enveloping algebra $U(\mathfrak{sl}(2))$ (and its dual) and more general Jordanian quantum groups, leading to the corresponding Jordanian classical $r$-matrices and quantum $R$-matrices (some of which were known before, e.g.,~\cite[Example~1, due D.~Gurevich]{lyub} and \cite[Section~2.2]{demidov}). Regarding that $U(\mathfrak{sl}(2))$ can be embedded into Yangian $Y(\mathfrak{sl}(2))$, it is not surprising that more elaborate versions of Jordanian twists are used to obtain new deformations of Yangians~\cite{stolinYsl2,stolinY2, stolinY1}, with applications to integrable models, chain models in particular~\cite{stolin2001}. A comprehensive study of a related class of classical $r$-matrices can be found in~\cite{stolin2008}.

Here we present another approach to new Jordanian deformations. Closer to the setup of our paper, consider the universal enveloping algebra of the 2-dimensional solvable Lie algebra with generators $H$ and $E$ with $[H,E] = E$. Define
\begin{gather*}
H^{\langle m\rangle }= H(H+1)\cdots(H+m-1), \qquad H^{\langle 0\rangle }=1.
\end{gather*}
Giaquinto and Zhang in~\cite[Theorem~2.20]{GZ}\footnote{The twist $\mathcal{F}$ in~\cite{GZ} is renamed here as $\mathcal{F}^{-1}$.} proposed the Jordanian twist~\cite{Og} in $r$-symmetrized form
\begin{gather*}
\mathcal{F}_{{\rm GZ}}^{-1} = \sum_{m=0}^{\infty} \frac{t^m}{m!} \sum_{r=0}^m (-1)^r \binom{m}{r}E^{m-r}H^{\langle r\rangle }\otimes E^r H^{\langle m-r\rangle}.
\end{gather*}

This twist can also be written as
\begin{gather*}
\mathcal{F}_{{\rm GZ}}^{-1} = \sum_{m=0}^\infty t^m \sum_{r=0}^\infty (-E)^{m-r}\binom{-H}{r} \otimes E^r \binom{-H}{m-r}
 = \sum_{k,l=0}^\infty t^{k+l}(-E)^k \binom{-H}{l}\otimes E^l \binom{-H}{k}.
\end{gather*}
We shall use a different notation in this paper, namely
\begin{gather*}
E=P, \qquad H=-D, \qquad [D,P]=-P.
\end{gather*}
This suggests an interpretation of $D$ as the relativistic dilation operator and $P$ as the momentum in some applications. We introduce a family of twists $\mathcal{F}_{{\rm GZ},u}^{-1}$, parametrized by parameter $u$, via an explicit series~(\ref{eq:FGZu}). This family interpolates between the Jordanian twists $\mathcal{F}^{-1}_0$ and $\mathcal{F}^{-1}_1$, where
\begin{gather}\label{eq:F0F1}
 \mathcal{F}_0= \exp \left( -\ln\left(1-\frac{1}{\kappa}P\right) \otimes D\right)\qquad \mbox{and}\qquad \mathcal{F}_1= \exp \left( -D\otimes \ln\left( 1+\frac{1}{\kappa}P\right)\right).
 \end{gather}
Our main interest in Jordanian twists is due to their appearance~\cite{BP, tolstoy1} in the study of $\kappa$-deformed Minkowski space (where the intepretation of $D$ and $P$ as the dilation and momentum operators also makes sense), where $\kappa$ is viewed as being linked to the scale of quantum gravity~\cite{LukRuegg,LukTol, tolstoy2}.

Any Drinfeld twist $\mathcal{F}$ can be modified by any 1-cochain $\omega\in\HH$, producing a new twist $\big(\omega^{-1}\otimes\omega^{-1}\big) \mathcal{F}\Delta(\omega)$, see~\cite{majid}. In an earlier paper~\cite{cobtw}, this procedure has been used to obtain a~certain twist $\mathcal{F}_{R,u}^{-1}$ for every $u$. In that context, it has been written in the form of a product of three exponential factors, see also reference~\cite{MMPP}. Regarding that it is obtained from a 2-cocycle by modification by a 1-cochain implies that it is itself a 2-cocycle.

Twists $\mathcal{F}_{{\rm GZ},u}$ and $\mathcal{F}_{R,u}$ generate the same Hopf algebra. It is proved in this paper that our generalized Giaquinto--Zhang twist $\mathcal{F}_{{\rm GZ},u}^{-1}$ satisfies the same differential identity as $\mathcal{F}_{R,u}^{-1}$, including the initial condition; consequently the two twists coincide. The importance of this result is that while $\mathcal{F}_{{\rm GZ},u}^{-1}$ is introduced via an explicit series expansion more suited for other calculations, the very construction of $\mathcal{F}_{R,u}^{-1}$ ensures that it is a 2-cocycle; we however also exhibit an elaborate proof of the 2-cocycle condition, directly from the definition of $\mathcal{F}_{{\rm GZ},u}^{-1}$.

The exposition is organized as follows. In Section~\ref{sec:GZ}, we define an interpolation $\mathcal{F}_{{\rm GZ},u}^{-1}$ via an explicit expansion and show that it has the claimed limits at $u=0$ and $u=1$. In Section~\ref{ssec:coc}, we prove directly from the definition that $\mathcal{F}_{{\rm GZ},u}^{-1}$
satisfies the 2-cocycle condition. In Section~\ref{ssec:star}, we compute the corresponding star product and in Section~\ref{ssec:twcop} the twisted coproduct $\Delta p_\mu$. In Section~\ref{ssec:nccoor} we introduce noncommutative coordinates and their realizations. Section~\ref{sec:cob} is dedicated to the family $\mathcal{F}_{R,u}$ of Jordanian twists obtained from a simple Jordanian twist $\mathcal{F}_0$~(\ref{eq:F0F1}) via twisting by a 1-cochain. We start the section by introducing $\mathcal{F}_{R,u}$ as a product of three exponential factors. Then we compute the corresponding deformed Hopf algebra in Section~\ref{ssec:Hopf}, introduce the corresponding noncommutative coordinates and realizations in Section~\ref{ssec:nccoorcob} and compute the star products in Section~\ref{ssec:starcob}. In Section~\ref{sec:equality}, we present two different proofs both showing that $\mathcal{F}_{{\rm GZ},u}$ equals $\mathcal{F}_{R,u}$. The first proof in Section~\ref{ssec:diff} is by showing that they solve the same Cauchy problem (an ordinary differential equation with initial condition). The second proof in Section~\ref{ssec:proofstar} uses a comparison among the star products. The final Section~\ref{sec:concl} is the conclusion. Appendix~\ref{appendixA} is added presenting a proof of an identity used in the proof in Section~\ref{ssec:coc} of the 2-cocycle condition for $\mathcal{F}^{-1}_{{\rm GZ},u}$.

\section{Generalization of the Giaquinto--Zhang twist}
\label{sec:GZ}
We define the generalized Jordanian twist $\mathcal{F}_{{\rm GZ},u}^{-1}$
via an explicit expansion,
\begin{gather}\label{eq:FGZu}
\mathcal{F}_{{\rm GZ},u}^{-1}=\sum_{k,l=0}^\infty \left(\frac{1}{\kappa}\right)^{k+l} \left( (u-1)P\right)^k \binom{D}{l} \otimes (uP)^l \binom{D}{k}.
\end{gather}
The twist $\mathcal{F}_{{\rm GZ},u}^{-1}$ interpolates between $\mathcal{F}_0^{-1}$ and $\mathcal{F}_1^{-1}$. For $u\rightarrow 0$ one can easily see~\cite{BP} that~(\ref{eq:FGZu}) reduces to
\begin{gather*}
\mathcal{F}_0^{-1}=\sum_{m=0}^\infty \left(\frac{-1}{\kappa}\right)^m P^m \otimes \binom D m = {\rm e}^{\ln ( 1-\frac{1}{\kappa}P )\otimes D}.
\end{gather*}
For $u=1$
\begin{gather*}
\mathcal{F}_1^{-1}= \sum_{m=0}^\infty \left( \frac{1}{\kappa}\right)^m \binom D m \otimes P^m= {\rm e}^{D\otimes \ln ( 1+\frac{1}{\kappa}P )}.
\end{gather*}
For $u=\frac{1}{2}$ this reduces to the twist introduced in~\cite{GZ}, where
\begin{gather*}
t=\frac{1}{2\kappa}, \qquad E=P,\qquad \text{and} \qquad H=-D.
\end{gather*}

\subsection{2-cocycle condition}\label{ssec:coc}
\begin{theorem} For arbitrary $u$, twists $\mathcal{F}_{{\rm GZ},u}^{-1}$ satisfy the $2$-cocycle condition given by
\begin{gather}\label{eq:cocycFm}
\big( (\Delta_0\otimes 1) \mathcal{F}^{-1}_{{\rm GZ},u}\big) \big(\mathcal{F}^{-1}_{{\rm GZ},u}\otimes 1\big)
= \big( (1\otimes\Delta_0) \mathcal{F}^{-1}_{{\rm GZ},u}\big) \big(1\otimes\mathcal{F}^{-1}_{{\rm GZ},u}\big).
\end{gather}
\end{theorem}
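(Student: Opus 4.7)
The plan is to verify \eqref{eq:cocycFm} by expanding both sides as formal power series in $1/\kappa$ with values in $\mathcal{H}^{\otimes 3}$ and matching coefficients on a normal-ordered basis of monomials $P^{a}\binom{D}{b}$ in each tensor slot.

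First I would record the two coproduct identities that make the calculation tractable. Since $P$ is primitive, $\Delta_0(P^k) = \sum_{i=0}^k \binom{k}{i} P^{i}\otimes P^{k-i}$. Since $D$ is also primitive, the formal group-like element $(1+x)^D = \exp(D\log(1+x))$ satisfies $\Delta_0\bigl((1+x)^D\bigr) = (1+x)^D\otimes(1+x)^D$; extracting the coefficient of $x^l$ yields the Vandermonde-type coproduct
\[
\Delta_0\binom{D}{l} = \sum_{a+b=l}\binom{D}{a}\otimes\binom{D}{b}.
\]
The relation $[D,P]=-P$ is equivalent to $\binom{D}{l}P^n = P^n\binom{D-n}{l}$, and this is the only rule needed to reduce every product arising in the computation to the normal-ordered form $P^{a}\binom{D}{b}$.

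Next I would apply $\Delta_0\otimes 1$ and $1\otimes\Delta_0$ respectively to the explicit expansion \eqref{eq:FGZu}, then multiply by the trailing factors $\mathcal{F}^{-1}_{\mathrm{GZ},u}\otimes 1$ and $1\otimes\mathcal{F}^{-1}_{\mathrm{GZ},u}$. After normal-ordering, each side becomes a multiple sum of triple tensors $P^{a_1}\binom{D}{b_1}\otimes P^{a_2}\binom{D}{b_2}\otimes P^{a_3}\binom{D}{b_3}$, weighted by $\kappa^{-(a_1+a_2+a_3)}$ times a polynomial in $u$ and $u-1$ built out of the binomial coefficients from the two coproducts above and from the shift $\binom{D-n}{l}$ produced by normal ordering.

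The main obstacle is then the combinatorial identity asserting that, for every fixed tuple $(a_1,b_1,a_2,b_2,a_3,b_3)$, the scalar coefficient of the corresponding monomial agrees on the two sides. This is a finite convolution-type sum of products of binomial coefficients with weights $u^{\alpha}(u-1)^{\beta}$, and it is precisely the identity deferred to Appendix~\ref{appendixA}. I expect the cleanest route to be a generating-function argument: introduce auxiliary parameters $x,y$ so that the sums over $l$ and $k$ collapse via $\sum_l \binom{D}{l}x^l = (1+x)^{D}$, reducing the equality of coefficients to an identity of the form $(1+ux)^{D}(1+(u-1)y)^{D}$-type products, whose two groupings then coincide by the $u$-deformed Chu--Vandermonde convolution. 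Failing a closed-form reduction, one argues by induction on one outer summation index using the Pascal identity $\binom{D}{l-1}+\binom{D}{l}=\binom{D+1}{l}$. Once the coefficients match, \eqref{eq:cocycFm} follows.
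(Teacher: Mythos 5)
Your reduction follows the paper's own route: expand $\mathcal{F}^{-1}_{{\rm GZ},u}$ in powers of $1/\kappa$, use primitivity of $P$ and $D$ (equivalently the Vandermonde coproduct of $\binom{D}{l}$), normal-order with $\binom{D}{l}P^n=P^n\binom{D-n}{l}$, and match coefficients of $P^{A}\otimes P^{B}\otimes P^{C}$ with polynomial coefficients in the three commuting copies of $D$. Up to that point there is no disagreement, and the observation that the weights in $u$, $u-1$ factor out per bidegree is consistent with how the paper organizes the computation.

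The genuine gap is that you stop exactly where the proof begins: the scalar identity that the matching of coefficients requires, namely~(\ref{eq:mainident}) (equivalently~(\ref{eq:bigident})), is never established. Saying that it ``is precisely the identity deferred to Appendix~\ref{appendixA}'' and that you ``expect'' a generating-function argument, with induction via Pascal's rule as a fallback, is a plan, not a proof --- and the plan as sketched is doubtful. The collapse via $\sum_l\binom{D}{l}x^l=(1+x)^{D}$ does not go through naively, because after normal ordering the binomials in $D$ carry shifts depending on the inner summation index (the factors $\binom{x+y-k-l+k_1+C}{C}$ and $\binom{y+z-k-l+l_1+A}{A}$ in~(\ref{eq:mainident})), so the sums do not factor into products of group-like elements; if they did, the twist would manifestly be a product of exponentials, which is a separate nontrivial result proved only in Section~\ref{sec:equality}. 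What the paper actually does in Appendix~\ref{appendixA} is elementary but specific: shift the summation indices, absorb binomials with $\binom{r}{s}\binom{w}{r}=\binom{w}{s}\binom{w-s}{r-s}$, expand the shifted binomial on each side by Chu--Vandermonde into a double sum over $(i,j)$, and then verify the termwise equality of the $y$-dependent factors via $\binom{y+i-A}{C-j}\binom{y}{A-i}=\binom{y+j-C}{A-i}\binom{y}{C-j}$. To complete your argument you must either carry out such a verification or supply an actual (not conjectural) generating-function derivation of~(\ref{eq:mainident}); as written, the central combinatorial content of the theorem is missing.
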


\begin{proof} If we write
\begin{gather*}
f_n := \sum_{k + l = n}\left( (u-1)P\right)^k \binom{D}{l} \otimes (uP)^l \binom{D}{k}
\end{gather*}
then
\begin{gather*}\mathcal{F}^{-1}_{{\rm GZ},u} = \sum_{n = 0}^\infty\left(\frac{1}{\kappa}\right)^n f_n\end{gather*}
with $f_n$ not depending on $\kappa$ and $f_0 = 1\otimes 1$.
In terms of $f_i$, the 2-cocycle condition becomes a~sequence of equations for all $n$,
\begin{gather*}
\sum_{i = 0}^n \left( (\Delta_0\otimes 1) f_i\right) (f_{n-i}\otimes 1)= \sum_{i = 0}^n \left( (1\otimes\Delta_0)f_i\right) (1\otimes f_{n-i}).
\end{gather*}

In the first order in $1/\kappa$,
\begin{gather*}
f_1 = (u - 1)P\otimes D + u D\otimes P,\\
(\Delta_0\otimes 1) f_1 + f_1\otimes 1 =
(1\otimes\Delta_0) f_1 + 1\otimes f_1,
\end{gather*}
and in the second order,
\begin{gather*}
f_2 = (u - 1)^2 P^2 \otimes \binom{D}{2} +(u-1) u P D\otimes P D + u^2\binom{D}{2}\otimes P^2,\\
(\Delta_0\otimes 1) f_2 + ((\Delta_0\otimes 1) f_1 )(f_1\otimes 1)+ f_2\otimes 1=(1\otimes\Delta_0) f_2 + ( (1\otimes\Delta_0)f_1 )(1\otimes f_1)+ 1\otimes f_2.
\end{gather*}
For general order $n$, it should hold that
\begin{gather*}
\sum_{\substack{k_1,k_2, l_1,l_2 = 0
 \\ k_1+k_2 = k,\, l_1+l_2 = l,\, k+l = n}}^\infty
\left[\Delta_0\left(P^{k_1}\binom{D}{l_1}\right)\otimes P^{l_1}\binom{D}{k_1}\right] \left[ P^{k_2}\binom{D}{l_2}\otimes P^{l_2}\binom{D}{k_2}\otimes 1\right]\\
\qquad{} = \sum_{\substack{k_1,k_2, l_1,l_2 = 0
 \\ k_1+k_2 = k,\, l_1+l_2 = l,\, k+l = n}}^\infty \left[ P^{k_1}\binom{D}{l_1}\otimes\Delta_0\left(P^{l_1}\binom{D}{k_1}\right)\right]\left[
 1\otimes P^{k_2}\binom{D}{l_2}\otimes P^{l_2}\binom{D}{k_2}\right].
\end{gather*}
This can be rewritten as
\begin{gather*}
 \sum_{k_1 = 0}^k\sum_{l_1 = 0}^l
 \left[\Delta_0\left(P^{k_1}\binom{D}{l_1}\right)\otimes P^{l_1}\binom{D}{k_1}\right]
 \left[P^{k-k_1}\binom{D}{l-l_1}\otimes P^{l-l_1}\binom{D}{k-k_1}\otimes 1\right] \\
 \qquad{} =\sum_{k_1 = 0}^k\sum_{l_1 = 0}^l
\left[P^{k_1}\binom{D}{l_1}\otimes\Delta_0\left(P^{l_1}\binom{D}{k_1}\right)\right] \left[1\otimes P^{k-k_1}\binom{D}{l-l_1}\otimes P^{l-l_1}\binom{D}{k-k_1}\right],\\
 \sum_{k_1 = 0}^k\sum_{l_1 = 0}^l \left[\Delta_0\big(P^{k_1}\big)\big(P^{k-k_1}\otimes P^{l_2-l_1}\big)
 \binom{(D-k+k_1)\otimes 1 + 1\otimes(D-l+l_1)}{l_1}\otimes P^{l_1}\binom{D}{k_1}\right]\\
 \qquad\quad {}\times \left[\binom{D}{l-l_1}\otimes\binom{D}{k-k_1}\otimes 1\right] \\
 \qquad{} = \sum_{k_1 = 0}^k\sum_{l_1 = 0}^l \big[P^{k_1}\otimes\big(\Delta_0\big(P^{l_1}\big)
 \big(P^{k-k_1}\otimes P^{l-l_1}\big)\big)\big]\\
 \qquad\quad{}\times \left[\binom{D}{l_1} \otimes\binom{(D-k+k_1)\otimes 1 + 1\otimes (D-l+l_1)}{k_1}
 \left(\binom{D}{l-l_1}\otimes\binom{D}{k-k_1}\right)\right].
\end{gather*}

Let us compare the terms of type $P^A\otimes P^B\otimes P^C$ with $A + B + C = k + l = n$ on both sides. We see only the terms with $C = l_1$ on the left-hand side and only the terms with $A = k_1$ on the right-hand side. We also need to take into account $\Delta_0\big(P^{k_1}\big) = \sum\limits_{a = 0}^{k_1} \binom{k_1}{a} P^{k_1-a}\otimes P^a$ on the left-hand side and $\Delta_0\big(P^{l_1}\big) = \sum\limits_{b = 0}^{l_1}\binom{l_1}{b} P^b\otimes P^{l_1-b}$ on the right-hand side to obtain
\begin{gather*}
\sum_{\substack{k_1 = 0\\ k - a = A}}^k\sum_{a = 0}^{k_1}
\binom{k_1}{a}\binom{(D-k+k_1)\otimes 1 + 1\otimes(D-l+C)}{C}\left(\binom{D}{l-C}\otimes\binom{D}{k-k_1}\right)\otimes\binom{D}{k_1}
\\=\sum_{\substack{l_1 = 0\\ l - b = C}}^l\sum_{b = 0}^{l_1}
\binom{l_1}{b}\binom{D}{l_1}\otimes\binom{(D-k+A)\otimes 1 + 1\otimes(D-l+l_1)}{A}
\left(\binom{D}{l-l_1}\otimes\binom{D}{k-A}\right),\\
\sum_{\substack{k - a = A\\ k_1 = 0}}^k \binom{k_1}{k-A}\binom{(D-k+k_1)\otimes 1 + 1\otimes (D-l+C)}{C}
\left(\binom{D}{l-C}\otimes\binom{D}{k-k_1}\right)\otimes\binom{D}{k_1}\\
= \sum_{\substack{l- b = C\\ l_1 = 0}}^l \binom{l_1}{l-C}\binom{D}{l_1} \otimes
\binom{(D-k+A)\otimes 1 + 1 \otimes (D-l+l_1)}{A}\left(\binom{D}{l-l_1}\otimes
 \binom{D}{k-A}\right)
\end{gather*}
for every $k,l\in\mathbf{N}_0$, and all $A\leq k$, $C\leq l$.

In terms of the new variables
\begin{gather*}
x = D\otimes 1 \otimes 1,\qquad y = 1 \otimes D \otimes 1,\qquad z = 1\otimes 1 \otimes D,
\end{gather*}
and taking into account that $k+l=n$, we reduce the 2-cocycle condition to the identity
\begin{gather}
\binom{x}{l-C} \sum_{k_1 = k-A}^k \binom{k_1}{k-A}\binom{x+y-k-l+k_1+C}{C}\binom{y}{k-k_1}\binom{z}{k_1}\nonumber \\
\qquad{}=\binom{z}{k-A}\sum_{l_1 = l-C}^l \binom{l_1}{l-C}
\binom{y+z-k-l+l_1+A}{A}\binom{x}{l_1}\binom{y}{l-l_1}\label{eq:mainident}
\end{gather}
for all $A \leq k$ and $C \leq l$. This is restated as~(\ref{eq:bigident}), and then proved, in Appendix~\ref{appendixA}.
\end{proof}

For $C = 0$ the identity~(\ref{eq:mainident}) reduces to
\begin{gather*}
\sum_{k_1 = k - A}^k \binom{k_1}{k-A} \binom{y}{k-k_1} \binom{z}{k_1} = \binom{z}{k-A} \binom{y+z-k+A}{A}.
\end{gather*}

\subsection{Star product}\label{ssec:star}

We now introduce an action $\triangleright$ of $P$ and $D$ on the space of formal power series in variables $x_\mu$, where $\mu = 0,1,\ldots,n$, by formulas
\begin{gather*}
 (P\triangleright f)(x) = -{\rm i} v_\mu\frac{\partial f(x)}{\partial x_\mu},\qquad (D\triangleright f)(x) = x_\mu \frac{\partial f(x)}{\partial x_\mu},
\end{gather*}
where the constants $v_\mu$ are such that $v^2\in\{-1,0,1\}$ and the Einstein summation rule is understood. We also denote $x = (x_\mu)$ and $\partial_\mu = \frac{\partial}{\partial x_\mu}$.

A star product $*$ is then defined as
\begin{gather*}
f * g=m \big( \mathcal{F}_{{\rm GZ},u}^{-1} (\triangleright \otimes \triangleright)(f\otimes g) \big)
\end{gather*}
for all formal power series $f$, $g$ in $x_\mu$~\cite{mercati}. In particular, for $f={\rm e}^{{\rm i}kx}$ and $g={\rm e}^{{\rm i}qx}$,
\begin{gather}\label{eq:expstarexp}
{\rm e}^{{\rm i}kx} * {\rm e}^{{\rm i}qx} =m\big( \mathcal{F}_{{\rm GZ},u}^{-1} (\triangleright \otimes \triangleright) \big({\rm e}^{{\rm i}kx} \otimes {\rm e}^{{\rm i}qx} \big) \big) =: {\rm e}^{\AA(u;k,q,x)},
\end{gather}
where
\begin{gather*}
kx = k_\alpha x_\alpha\qquad \mathrm{and}\qquad qx=q_\alpha x_\alpha
\end{gather*}
are elements of the Minkowski space-time algebra, the function $\mathcal{A}$ is implicitly defined by~(\ref{eq:expstarexp}) and $m$ denotes the multiplication map on usual functions. Using the actions of $P$ and $D$ on ${\rm e}^{{\rm i}kx}$, it follows that
\begin{gather*}
P\triangleright {\rm e}^{{\rm i}kx} = (v\cdot k) {\rm e}^{{\rm i}kx},\qquad
P\triangleright {\rm e}^{{\rm i}qx} = (v\cdot q) {\rm e}^{{\rm i}qx},
\end{gather*}
where $(v\cdot k) = v_\alpha k_\alpha$, $(v\cdot q) = v_\alpha q_\alpha$.
For $j<l$,
\begin{gather*}
\left.\left(P^j \binom D l \triangleright {\rm e}^{{\rm i}kx}\right)\right|_{x=0} = 0, \\
\left.\left(P^n \binom D n \triangleright {\rm e}^{{\rm i}kx}\right)\right|_{x=0} = (v\cdot k)^n, \\
\left.\left(P^{n+1} \binom D n \triangleright {\rm e}^{{\rm i}kx}\right)\right|_{x=0} = (n+1)(v\cdot k)^{n+1}.
\end{gather*}
The following identities hold
\begin{gather*}
\binom D n \triangleright {\rm e}^{{\rm i}kx} = \frac{({\rm i}kx)^n}{n!}{\rm e}^{{\rm i}kx}, \qquad n\in\mathbf{N}_0,\\
\left.\left(P^j \binom D l \triangleright {\rm e}^{{\rm i}kx}\right)\right|_{x=0} =
\left.\left(P^j \triangleright \frac{({\rm i}kx)^l}{l!}{\rm e}^{{\rm i}kx}\right)\right|_{x=0}
= \binom j l (v_\alpha k_\alpha)^j.
\end{gather*}
Then we have
\begin{gather}
 \big({\rm e}^{{\rm i}kx}* {\rm e}^{{\rm i}qx}\big)\big\vert_{x=0} = \sum_{n=0}^\infty \left( \frac{u(u-1)}{\kappa^2}\right)^n (v\cdot k)^n (v\cdot q)^n\nonumber\\
 \hphantom{\big({\rm e}^{{\rm i}kx}* {\rm e}^{{\rm i}qx}\big)\big\vert_{x=0}}{} =\frac{1}{1-\frac{u(u-1)}{\kappa^2}(v\cdot k)(v\cdot q)} = {\rm e}^{\AA(u;k,q,x)}\left.\right\vert_{x=0}. \label{eq:expstarexpx0}
\end{gather}
We now calculate the partial derivatives of the star product,
\begin{gather*}
 \partial_\mu \big({\rm e}^{{\rm i}kx} * {\rm e}^{{\rm i}qx}\big)\big\vert_{x=0}=
 \sum_{j,l=0}^\infty \left(\frac{1}{\kappa}\right)^{j+l}
 \left\{ {\rm i}k_\mu ((u-1)P)^j\triangleright \left[\frac{(v\cdot k)^l}{l!}+\frac{(v\cdot k)^{l-1}}{(l-1)!}\right]{\rm e}^{{\rm i}kx}\right\}\\
 {}\times \left\{ (uP)^l \triangleright \frac{(v\cdot q)^j}{j!}{\rm e}^{{\rm i}qx}\right\}\bigg|_{x=0} \\
 + \sum_{j,l = 0}^\infty\left( \frac{1}{\kappa}\right)^{j+l} \left.\left\{
((u-1)P)^j\triangleright \frac{(v\cdot k)^l}{ l!}{\rm e}^{{\rm i}kx}
 \right\}
 \left\{
 {\rm i}q_\mu (v P)^l\triangleright \left[\frac{(v\cdot q)^j}{j!}+\frac{(v\cdot q)^{j-1}}{(j-1)!}{\rm e}^{{\rm i}qx}\right]
 \right\}\right|_{x=0}\\
 = {\rm i}(k_\mu+q_\mu) \sum_{n=0}^\infty \left(\frac{u(u-1)}{\kappa^2}\right)^n (v\cdot k)^n(v\cdot q)^n + {\rm i}(k_\mu+q_\mu)\sum_{n=0}^\infty \left(\frac{u(u-1)}{\kappa^2}\right)^n n (v\cdot k)^n\\
+ (v \cdot q)^n +{\rm i}\left[k_\mu \frac{u}{\kappa}(v\cdot q)+q_\mu \frac{(u-1)}{\kappa}(v\cdot k)\right]\sum_{n=0}^\infty \left(\frac{u(u-1)}{\kappa^2}\right)^n (n+1) (v\cdot k)^n(v\cdot q)^n\\
= {\rm i} \mathcal{D}_\mu(k,q) \frac{1}{1-\frac{u(u-1)}{\kappa^2}(v\cdot k)(v\cdot q)}
\end{gather*}
and
\begin{gather*}
 \partial_\mu \big({\rm e}^{{\rm i}kx} * {\rm e}^{{\rm i}qx}\big)\big\vert_{x=0}= {\rm i}\left( k_\mu \left( 1+\frac{u}{\kappa}(v\cdot q)\right) +q_\mu \left( 1+\frac{u-1}{\kappa}(v\cdot k)\right) \right) \\
 \hphantom{\partial_\mu \big({\rm e}^{{\rm i}kx} * {\rm e}^{{\rm i}qx}\big)\big\vert_{x=0}=}{}\times \sum_{n=0}^\infty \left(\frac{u(u-1)}{\kappa^2}\right)^n (n+1)(v\cdot k)^n(v\cdot q)^n \\
\hphantom{\partial_\mu \big({\rm e}^{{\rm i}kx} * {\rm e}^{{\rm i}qx}\big)\big\vert_{x=0}}{} = {\rm i}\mathcal{D}_\mu (k,q) \frac{1}{1-\frac{u(u-1)}{\kappa^2}(v\cdot k)(v\cdot q)}= {\rm i}\mathcal{D}_\mu (k,q)\big({\rm e}^{{\rm i}kx} * {\rm e}^{{\rm i}qx}\big)\big\vert_{x=0}.
\end{gather*}
Note that
\begin{gather*}
{\rm i} \mathcal{D}_\mu (k,q) = \left.\left(\frac{\partial\AA(u; k,q,x) }{\partial x_\mu} \right)\right|_{x=0}.
\end{gather*}
It follows that
\begin{gather}\label{eq:Dmu}
\mathcal{D}_\mu (k,q)= \frac{k_\mu \big(1+\frac{u}{\kappa}(v\cdot q)\big) +q_\mu \big( 1+\frac{u-1}{\kappa} (v\cdot k)\big)}{1-\frac{u(u-1)}{\kappa^2}(v\cdot k) (v\cdot q)}.
\end{gather}

\subsection[Twisted coproduct $\Delta(p_\mu)$]{Twisted coproduct $\boldsymbol{\Delta(p_\mu)}$}\label{ssec:twcop}

Let now $p_\mu = -{\rm i}\partial_\mu$ be the momentum operator. Let us define $\Delta p_\mu$ by
\begin{gather}\label{eq:Deltapmu}
\Delta p_\mu =\mathcal{D}_\mu (p\otimes 1,1\otimes p)= \frac{p_\mu \otimes \big(1+\frac{u}{\kappa}P\big) +\big(1+\frac{u-1}{\kappa}P\big) \otimes p_\mu }{1\otimes 1-\frac{u(u-1)}{\kappa^2}P\otimes P}, \qquad P=v_\alpha p_\alpha.
\end{gather}
We want to show that $\Delta p_\mu $ is the deformed coproduct with respect to the twist $\mathcal{F}_{{\rm GZ},u}$,
\begin{gather}\label{eq:FDelta0pmuF}
\Delta p_\mu =\mathcal{F}_{{\rm GZ},u}\Delta_0 p_\mu \mathcal{F}_{{\rm GZ},u}^{-1},
\end{gather}
where
\begin{gather}\label{eq:Delta0pmu}
\Delta_0 p_\mu =p_\mu \otimes 1+1\otimes p_\mu.
\end{gather}
Using~(\ref{eq:Delta0pmu}) and~(\ref{eq:Deltapmu}), we may rewrite~(\ref{eq:FDelta0pmuF}) as
\begin{gather*}
 \mathcal{F}_{{\rm GZ},u}^{-1} \frac{p_\mu \otimes \big(1+\frac{u}{\kappa}P\big) +\big(1+\frac{u-1}{\kappa}P\big) \otimes p_\mu }{1\otimes 1-\frac{u(u-1)}{\kappa^2}P\otimes P} =
 (p_\mu\otimes 1+1\otimes p_\mu) \mathcal{F}_{{\rm GZ},u}^{-1}
\end{gather*}
and, after multiplying from the right by the denominator $1\otimes 1-\frac{u(u-1)}{\kappa^2}P\otimes P$, as
 \begin{gather}
\mathcal{F}_{{\rm GZ},u}^{-1} \left(p_\mu \otimes \left(1+\frac{u}{\kappa}P\right) +\left(1+\frac{u-1}{\kappa}P\right) \otimes p_\mu\right)\nonumber\\
\qquad{} = (p_\mu\otimes 1+1\otimes p_\mu) \mathcal{F}_{{\rm GZ},u}^{-1}\left(1\otimes 1-\frac{u(u-1)}{\kappa^2}P\otimes P\right).\label{eq:twoterms}
 \end{gather}
We shall show the equality in~(\ref{eq:twoterms}) by splitting it into a sum of two equalities, (\ref{eq:Fpotimes1}) and~(\ref{eq:F1otimesp}), which are then separately proved. Descriptively, (\ref{eq:Fpotimes1}) involves all those summands in expanded~(\ref{eq:twoterms}) where, in one of the factors, $p_\mu$ is at the left side from the tensor product,
\begin{gather}\label{eq:Fpotimes1}
\mathcal{F}_{{\rm GZ},u}^{-1} \left( p_\mu \otimes \left( 1+\frac{u}{\kappa}P\right) \right) =(p_\mu\otimes 1) \mathcal{F}_{{\rm GZ},u}^{-1} \left( 1\otimes 1- \frac{u(u-1)}{\kappa^2} P\otimes P\right).
\end{gather}
To prove this equality, we first observe that by induction the equality $[P,D]=P$ implies the commutation relation	
\begin{gather*}
p_\mu \binom D k =\binom{D+1}{k} p_\mu.
\end{gather*}
Hence
\begin{gather*}
P \binom D k =\binom{D+1}{k} P,
\end{gather*}
i.e.,
\begin{gather*}
\binom D k P=P \binom{D-1}{k}.
\end{gather*}
We calculate the left-hand side of~(\ref{eq:Fpotimes1}) as
\begin{gather*}
 \mathcal{F}_{{\rm GZ},u}^{-1}\left(p_\mu \otimes \left( 1+\frac{u}{\kappa}P\right)\right) \\
 \qquad {} =\sum_{k,l=0}^\infty \left( \left(\frac{u-1}{\kappa}P\right)^k \binom D l \otimes \left( \frac{uP}{\kappa}\right)^l \binom D k \right) \left( p_\mu \otimes 1+\frac{u}{\kappa} p_\mu \otimes P\right) \\
 \qquad{} =\sum_{k,l=0}^\infty \left( \left( \frac{(u-1)P}{\kappa}\right)^k \binom D l \otimes \left(\frac{uP}{\kappa}\right)^l \binom D k \right.\\
\left.\qquad\quad{} +\left(\frac{(u-1)P}{\kappa}\right)^k \binom D l \otimes \left( \frac{uP}{\kappa}\right)^{l+1} \binom{D-1}{k} \right) (p_\mu \otimes 1)
\end{gather*}
and the right-hand side of~(\ref{eq:Fpotimes1}) as
\begin{gather*}
 (p_\mu \otimes 1) \mathcal{F}_{{\rm GZ},u}^{-1} \left(1\otimes 1 -\frac{u(u-1)}{\kappa^2}P\otimes P\right) \nonumber \\
\qquad{} = \sum_{k,l=0}^\infty \left( \left( \frac{(u-1)P}{\kappa} \right)^k \binom{D+1}{l} \otimes \left(\frac{uP}{\kappa} \right)^l \binom D k \right.\\
\left. \qquad\quad{} -\left( \frac{(u-1)P}{\kappa}\right)^{k+1} \binom D l \otimes \left( \frac{uP}{\kappa}\right)^{l+1} \binom{D-1}{k} \right) (p_\mu \otimes 1).
\end{gather*}
Comparing the terms of type $P^k\otimes P^l$ for all $k$ and $l$, we find
\begin{gather*}
 \left( \frac{(u-1)P}{\kappa}\right)^k \binom D l \otimes \left( \frac{uP}{\kappa}\right)^l \binom D k + \left( \frac{(u-1)P}{\kappa} \right)^k l \binom{D}{l-1} \otimes \left(\frac{uP}{\kappa}\right)^l \binom D k \frac{D-k}{D} \\
\qquad{} =\left( \frac{(u-1)P}{\kappa} \right)^k (D+1)\binom{D}{l-1} \otimes \left(\frac{uP}{\kappa}\right)^l \binom D k\\
\qquad\quad{} - \left( \frac{(u-1)P}{\kappa} \right)^k l \binom{D}{l-1} \otimes \left( \frac{uP}{\kappa}\right)^l \frac{k}{D} \binom D k
\end{gather*}
and
\begin{gather*}
 \left( \frac{(u-1)P}{\kappa}\right)^k \binom D l \otimes \left( \frac{uP}{\kappa}\right)^l \binom D k\nonumber \\
\qquad{}= \left( \frac{(u-1)P}{\kappa} \right)^k l \binom{D}{l-1} \otimes \left( \frac{uP}{\kappa}\right)^l \binom D k \\
\qquad\quad{} +\left( \frac{(u-1)P}{\kappa}\right)^k \!(D+1) \binom{D}{l-1} \otimes \left( \frac{uP}{\kappa}\right)^l \binom D k \nonumber \\
\qquad{} = \left( \frac{u-1)P}{\kappa}\right)^k \binom{D}{l-1} (D-l+1) \otimes \left( \frac{uP}{\kappa}\right)^l \binom D k\nonumber \\
\qquad{} = \left( \frac{(u-1)P}{\kappa}\right)^k \binom D l\otimes \left( \frac{uP}{\kappa} \right)^l \binom D k,
\end{gather*}
which proves~(\ref{eq:Fpotimes1}).

Analogously, we prove the equality of the remaining summands in~(\ref{eq:twoterms}),
\begin{gather}\label{eq:F1otimesp}
\mathcal{F}_{{\rm GZ},u}^{-1} \left( \left(1+(u-1)\frac{1}{\kappa}P\right) \otimes p_\mu \right) = (1\otimes p_\mu) \mathcal{F}_{{\rm GZ},u} ^{-1} \left(1\otimes 1-\frac{u(u-1)}{\kappa^2} P\otimes P\right).
\end{gather}
Now~(\ref{eq:Fpotimes1}) and~(\ref{eq:F1otimesp}) add to~(\ref{eq:twoterms}). Hence this proves~(\ref{eq:FDelta0pmuF}), that is
\begin{gather*}
\Delta p_\mu =\mathcal{F}_{{\rm GZ},u} \Delta_0 (p_\mu) \mathcal{F}_{{\rm GZ},u}^{-1}.
\end{gather*}
The coproduct $\Delta p_\mu$ satisfies the coassociativity condition
\begin{gather*}
(\Delta \otimes 1)\Delta p_\mu =(1\otimes \Delta )\Delta p_\mu.
\end{gather*}
Equation~(\ref{eq:FDelta0pmuF}) can be rewritten as
\begin{gather}\label{eq:D0pFeqFDp}
 \Delta_0 p_\mu \mathcal{F}_{{\rm GZ},u}^{-1}=\mathcal{F}_{{\rm GZ},u}^{-1} \Delta p_\mu.
\end{gather}

This enables us to obtain explicit formulas for the derivatives of the star product and for the star product from Section~\ref{ssec:star}. Namely, for the partial derivatives of the star product, we compute
\begin{gather}
 \partial_\mu \big({\rm e}^{{\rm i}kx} * {\rm e}^{{\rm i}qx}\big) \overset{(\ref{eq:expstarexp})}{=}
 \partial_\mu m \mathcal{F}_{{\rm GZ},u}^{-1} \big({\rm e}^{{\rm i}kx}\otimes {\rm e}^{{\rm i}qx}\big)
= m (\partial_\mu\otimes 1+1\otimes \partial_\mu) \mathcal{F}_{{\rm GZ},u}^{-1}
 \big({\rm e}^{{\rm i}kx}\otimes {\rm e}^{{\rm i}qx}\big)\nonumber \\
\hphantom{\partial_\mu \big({\rm e}^{{\rm i}kx} * {\rm e}^{{\rm i}qx}\big)}{}
 = m {\rm i}\Delta_0(p_\mu)\mathcal{F}_{{\rm GZ},u}^{-1} \big({\rm e}^{{\rm i}kx}\otimes {\rm e}^{{\rm i}qx}\big)
 \overset{(\ref{eq:D0pFeqFDp})}{=} {\rm i} m\mathcal{F}_{{\rm GZ},u}^{-1}\Delta(p_\mu) \big({\rm e}^{{\rm i}kx}\otimes {\rm e}^{{\rm i}qx}\big)\nonumber\\
\hphantom{\partial_\mu \big({\rm e}^{{\rm i}kx} * {\rm e}^{{\rm i}qx}\big)}{}
 \overset{(\ref{eq:Deltapmu})}{=} m {\rm i}\mathcal{F}_{{\rm GZ},u}^{-1}\mathcal{D}_\mu(p\otimes 1,1\otimes p)\big({\rm e}^{{\rm i}kx}\otimes {\rm e}^{{\rm i}qx}\big)
 = {\rm i} m\mathcal{F}_{{\rm GZ},u}^{-1}\mathcal{D}_\mu(k,q)\big({\rm e}^{{\rm i}kx}\otimes {\rm e}^{{\rm i}qx}\big)\nonumber\\
 \hphantom{\partial_\mu \big({\rm e}^{{\rm i}kx} * {\rm e}^{{\rm i}qx}\big)}{}
 = {\rm i}\mathcal{D}_\mu (k,q) m \mathcal{F}_{{\rm GZ},u}^{-1}\big({\rm e}^{{\rm i}kx} \otimes {\rm e}^{{\rm i}qx}\big)
 \overset{(\ref{eq:expstarexp})}{=} i\mathcal{D}_\mu (k,q) \big({\rm e}^{{\rm i}kx}*{\rm e}^{{\rm i}qx}\big),\label{eq:partexpstarexp}
\end{gather}
where $m$ denotes the multiplication map for usual functions. Knowing the partial derivatives~(\ref{eq:partexpstarexp}) and the initial value~(\ref{eq:expstarexpx0}) of the star product at $x = 0$, we finally obtain
\begin{gather}\label{eq:eikxstar}
{\rm e}^{{\rm i}kx} * {\rm e}^{{\rm i}qx} = {\rm e}^{{\rm i}\mathcal{D}_\mu (k,q)x^\mu } \frac{1}{1-\frac{u(u-1)}{\kappa^2}(v\cdot k) (v\cdot q)},
\end{gather}
where $\mathcal{D}_\mu (k,q)$ is given in~(\ref{eq:Dmu}). This star product is associative in agreement with the fact that twists $\mathcal{F}^{-1}_{{\rm GZ},u}$ satisfy the 2-cocycle condition~(\ref{eq:cocycFm}).

\subsection{Noncommutative coordinates and realizations} \label{ssec:nccoor}

Here we introduce noncommutative coordinates $\hat{x}_\mu$, the commutation relations among them and their realizations. We use realizations of elements of noncommutative algebras via a Heisenberg algebra with generators $x_\mu$, $p_\nu$, $[x_\mu,x_\nu] = 0$, $[p_\mu,p_\nu] = 0$, $[x_\mu,p_\nu] = -{\rm i}\delta_{\mu,\nu}$. The following expression defines noncommutative coordinates $\hat{x}_\mu$~\cite{mercati},
\begin{gather*}
\hat{x}_\mu = m\big( \mathcal{F}_{{\rm GZ},u}^{-1} (\triangleright \otimes 1)(x_\mu \otimes 1)\big)
 = x_\mu \left(1+\frac{u}{\kappa}P\right) +\frac{{\rm i}}{\kappa} v_\mu (1-u)\left( 1+\frac{u}{\kappa}P\right) D \\
\hphantom{\hat{x}_\mu}{} = \left( x_\mu +(1-u) \frac{{\rm i}}{\kappa} v_\mu D\right) \left( 1+\frac{u}{\kappa}P\right) +\frac{u(1-u)}{\kappa^2} {\rm i}v_\mu P.
\end{gather*}
Noncommutative coordinates $\hat{x}_\mu$ satisfy a $\kappa$-deformed Heisenberg algebra that corresponds to the $\kappa$-Minkowski space~\cite{govindarajan,EPJC2015,pikuticEPJC2017,LukRuegg,LukTol,mkj,mmss2,stojic2}
\begin{gather*}
[\hat{x}_\mu,\hat{x}_\nu] = \frac{{\rm i}}{\kappa} (v_\mu \hat{x}_\nu -v_\nu \hat{x}_\mu),\\
 [ p_\mu, \hat{x}_\nu ] = \left(-{\rm i}\delta_{\mu,\nu} +\frac{{\rm i}}{\kappa} v_\nu (1-u)p_\mu \right) \left( 1+\frac{u}{\kappa} P\right).
\end{gather*}
In the case $u=0$,
\begin{gather*}
\hat{x}_\mu =x_\mu + \frac{{\rm i}}{\kappa} v_\mu D.
\end{gather*}
In the case $u=1$,
\begin{gather*}
\hat{x}_\mu = x_\mu \left(1+\frac{u}{\kappa}P\right).
\end{gather*}
Using this realization of $\hat{x}_\mu$ and the method from~\cite{mercati} we obtain the same star product~(\ref{eq:eikxstar}).

\section[Interpolation between Jordanian twists induced by a 1-cochain]{Interpolation between Jordanian twists\\ induced by a 1-cochain}\label{sec:cob}

Another construction for a generalized Jordanian twist is possible~\cite{cobtw}. This twist, here denoted~$\mathcal{F}_{R,u}$, has been introduced as a product of three exponential factors,
\begin{gather}\label{eq:FRuexp}
\mathcal{F}_{R,u}=\exp \left(\frac{u}{\kappa}(PD\otimes 1+1 \otimes PD)\right) \exp \left(\!{-}\ln\left(1-\frac{1}{\kappa}P\right) \otimes D \right) \exp \left( \Delta_0\left(\!{-}\frac{u}{\kappa}PD\right)\right),\!\!\!\!
\end{gather}
where $u$ is a real parameter, $u\in\mathbb{R}$. The symbol $R$ in the subscript refers to the position of the dilatation generator in the formula, namely it is on the right with respect to~$P$. The classical $r$-matrix corresponding to twists $F_{{\rm GZ},u}$~(\ref{eq:FGZu}) and $F_{R,u}$~(\ref{eq:FRuexp}) does not depend on the parameter~$u$, namely
\begin{gather*}
r = \frac{1}{\kappa}(D\otimes P - P\otimes D).
\end{gather*}

The above form~(\ref{eq:FRuexp}) of the family of twists $\mathcal{F}_{R,u}$ is obtained from a simple Jordanian twist~$\mathcal{F}_0$, using a transformation by a 1-cochain. Namely, according to Drinfeld~\cite{drinfeld,majid}, if~$\mathcal{F}$ is any normalized Drinfeld twist and $\omega_R$ is any element in the Hopf algebra satisfying the normalization $\epsilon(\omega_R) = 1$, then the formula $\mathcal{F}_\omega := \big(\omega^{-1}\otimes\omega^{-1}\big) \mathcal{F}\Delta(\omega)$ defines a normalized Drinfeld twist again (that is, the 2-cocycle and counitality conditions are satisfied again). In particular, if $\mathcal{F} = 1\otimes 1$ we get a 2-coboundary twist $\big(\omega^{-1}\otimes\omega^{-1}\big)\Delta(\omega)$. If the two twists, $\mathcal{F}$ and $\mathcal{F}_\omega$, transform one into another by a 1-cochain, we say that they are cohomologous in the sense of nonabelian cohomology~\cite{majid}. In this case, twisted Hopf algebras $H_{\mathcal{F}}$ and $H_{\mathcal{F}_\omega}$ are isomorphic~\cite{majid} and, for each $H$-module algebra $M$, the corresponding twistings $M_{\mathcal{F}}$ and $M_{\mathcal{F}_\omega}$ are also mutually isomorphic as algebras. If~$\omega$ is group like, $\mathcal{F}_\omega$ is evidently obtained from~$\mathcal{F}$ by an inner automorphism. Regarding that cohomologous twists give isomorphic mathematical objects, one sometimes thinks of these twists as gauge equivalent.

If $\mathcal{F} = \mathcal{F}_0$ is a simple Jordanian twist, and $\omega = \omega_R = \exp \big( {-}\frac{u}{\kappa}PD \big)$, we obtain the twist $\mathcal{F}_{R,u}=\mathcal{F}_{\omega_R}=\big(\omega^{-1}_R\otimes\omega^{-1}_R\big)\mathcal{F}_0\Delta(\omega_R)$, see~\cite{cobtw}. This also shows that, for any~$u$, twist~$\mathcal{F}_{R,u}$ satisfies the 2-cocycle and normalization conditions. Regarding that $u$ appeared by gauge transforming~$\mathcal{F}_0$, we can view $u$ as a gauge parameter (the reader should not confuse $u$ with a spectral parameter involved in some other Jordanian deformations). For $u=0$, twist $\mathcal{F}_{R,u}$ simplifies to~$\mathcal{F}_0$ and for $u=1$ to $\mathcal{F}_1$.

\subsection{Hopf algebra}\label{ssec:Hopf}

The coalgebra sector of the Hopf algebra $\mathcal{H}^{\mathcal{F}_{R,u}}$ for the deformation with $\mathcal{F}_{R,u}$ is given by the formulas
\begin{gather*}
\Delta^{\mathcal{F}_{R,u}} p_\mu = \frac{p_\mu \otimes \big( 1+u\frac{1}{\kappa}P \big) +\big( 1-(1-u)\frac{1}{\kappa}P\big) \otimes p_\mu }{1\otimes 1 +u(1-u)\left(\frac{1}{\kappa}\right)^2 P\otimes P}, \\
\Delta^{\mathcal{F}_{R,u}} D = \left( 1\otimes 1+\frac{u(1-u)}{\kappa^2} P\otimes P\right) \left( D\otimes \frac{1}{1+\frac{u}{\kappa}P} +\frac{1}{1-\frac{1-u}{\kappa}P}\otimes D\right), \\
S^{\mathcal{F}_{R,u}} (p_\mu) = -\frac{p_\mu }{1-(1-2u)\frac{1}{\kappa}P}, \\
S^{\mathcal{F}_{R,u}} (D) = -\left(1-\frac{1-u}{\kappa}P\right) D \left(\frac{1 - \frac{1-2u}{\kappa}P}{1-\frac{1-u}{\kappa}P}\right).
\end{gather*}

A similar analysis as in Section~\ref{sec:GZ} for $\Delta^{\mathcal{F}_{{\rm GZ},u}}p_\mu$ leads to the conclusion that $\Delta^{\mathcal{F}_{{\rm GZ},u}}D = \Delta^{\mathcal{F}_{R,u}}D$.

\subsection{Noncommutative coordinates and realizations}\label{ssec:nccoorcob}

In general, we consider realizations of the form
\begin{gather*}
 \hat{x}_\mu = x_\alpha \varphi_{\alpha \mu}(p) +\chi (p).
\end{gather*}
We can obtain the appropriate realization via the twist as follows
\begin{gather*}
 \hat{x}_\mu = m\big( \mathcal{F}_{R,u}^{-1} (\triangleright \otimes 1) (x_\mu\otimes 1) \big)
 =\left( x_\mu +\frac{{\rm i}}{\kappa} v_\mu (1-u)D \right) \left( 1+\frac{u}{\kappa}P\right) +u(1-u) \frac{{\rm i}}{\kappa^2}v_\mu P.
\end{gather*}

\subsection{Star product}\label{ssec:starcob}

Using the above realization of $\hat{x}_\mu$~\cite{mercati}, we get
\begin{gather*}
{\rm e}^{{\rm i}kx}*{\rm e}^{{\rm i}qx} = {\rm e}^{{\rm i}\mathcal{D}_\mu (u;k,q)x_\mu +{\rm i}\mathcal{G}(u;k,q)} = {\rm e}^{{\rm i}\mathcal{D}_\mu (u;k,q)x_\mu} \frac{1}{1+\frac{u(1-u)}{\kappa^2}(v\cdot k) (v\cdot q)},
\end{gather*}
where $k$ and $q$ belong to the $n$-dimensional Minkowski spacetime $\mathcal{M}_{1,n-1}$ and where
\begin{gather*}
\mathcal{D}_\mu (u;k,q)= \frac{k_\mu \big( 1+\frac{u}{\kappa}(v\cdot q)\big)+ \big(1-\frac{1-u}{\kappa}(v\cdot k)\big)q_\mu }{1+\frac{u(1-u)}{\kappa^2}(v\cdot k) (v\cdot q)}
\end{gather*}
as in equation~(\ref{eq:Dmu}), and finally
\begin{gather*}
\mathcal{G}(u;k,q) ={\rm i}\ln\left(1+\frac{u(1-u)}{\kappa^2}(v\cdot k) (v\cdot q)\right).
\end{gather*}
\begin{rem} Note that the corresponding quantum $R$-matrix is given by
\begin{gather*}
\mathcal{R}_{R,u}= \mathcal{F}_{R,u}^{21}\mathcal{F}_{R,u}^{-1} = \exp(u(PD\otimes 1 + 1\otimes PD)\mathcal{R}_0\exp(-u(PD\otimes 1+ 1\otimes PD)),
\end{gather*}
where
\begin{gather*}
\mathcal{R}_0 = \exp\left(-D\otimes\ln\left(1 - \frac{P}{\kappa}\right)\right)\exp\left(\ln\left(1-\frac{P}{\kappa}\right)\otimes D\right)\\
\hphantom{\mathcal{R}_0}{}
= \sum_{k,l = 0}^\infty \binom {-D}{l} \left(\frac{-P}{\kappa}\right)^k\otimes \left(\frac{-P}{\kappa}\right)^l \binom{D}{k}.
\end{gather*}
\end{rem}

Both twists, $\mathcal{F}_{{\rm GZ},u}^{-1}$ and $\mathcal{F}_{R,u}^{-1}$, lead to the same Hopf algebra, the same realizations of noncommutative coordinates $\hat{x}_\mu$ and likewise for the star product ${\rm e}^{{\rm i}kx}*{\rm e}^{{\rm i}qx}$. This suggests that there must be a close relation between the two twists, $\mathcal{F}_{{\rm GZ},u}^{-1}$ and $\mathcal{F}_{R,u}^{-1}$. In the next section, we present a proof that indeed $\mathcal{F}_{{\rm GZ},u}^{-1}=\mathcal{F}_{R,u}^{-1}$.

\section{Proofs of the equality of the two twists}\label{sec:equality}

\subsection[Differentiation with respect to parameter $u$]{Differentiation with respect to parameter $\boldsymbol{u}$}\label{ssec:diff}

Differentiating $\mathcal{F}_{R,u}^{-1}$ from equation~(\ref{eq:FRuexp}) with respect to the parameter $u$ gives
\begin{gather}\label{eq:kdduFRu}
\kappa \frac{{\rm d} \mathcal{F}_{R,u}^{-1}}{{\rm d}u} = (P\otimes D+D\otimes P) \mathcal{F}_{R,u}^{-1} +\big[PD\otimes 1+1\otimes PD,\mathcal{F}_{R,u}^{-1}\big].
\end{gather}
Differentiating $\mathcal{F}_{{\rm GZ},u}^{-1}$ from equation~(\ref{eq:FGZu}) with respect to $u$ gives
\begin{gather}
\kappa \frac{{\rm d} \mathcal{F}_{{\rm GZ},u}^{-1}}{{\rm d}u} =(P\otimes D +D\otimes P) \mathcal{F}_{{\rm GZ},u}^{-1}\nonumber\\
\hphantom{\kappa \frac{{\rm d} \mathcal{F}_{{\rm GZ},u}^{-1}}{{\rm d}u} =}{} +(P\otimes 1-1\otimes P) \sum_{k,l=0}^{\infty} \frac{-k+l}{\kappa^{k+l}} (u-1)^k P^k \binom D l \otimes (uP)^l \binom{D}{k}.\label{eq:kdduFGZu}
\end{gather}
Using the commutation relations
\begin{gather*}
\left[ PD, P^k\binom D l \right]=(l-k) P^{k+1} \binom D l
\end{gather*}
and
\begin{gather*}
\left[PD, P^l \binom D k \right]=(k-l)P^{l+1} \binom{D}{k},
\end{gather*}
we find that the right-hand sides of~(\ref{eq:kdduFRu}) and of~(\ref{eq:kdduFGZu}) agree,
\begin{gather*}
{\rm r.h.s.} = (P\otimes D+D\otimes P) \mathcal{F}_{{\rm GZ},u}^{-1} +\big[PD\otimes 1+1 \otimes PD, \mathcal{F}_{{\rm GZ},u}^{-1}\big].
\end{gather*}
This shows that $\mathcal{F}_{R,u}^{-1}$ and $\mathcal{F}_{{\rm GZ},u}^{-1}$ as functions of the parameter $u$ satisfy the same ordinary differential equation, while the initial conditions agree. Indeed, at $u=0$,
\begin{gather*}
\mathcal{F}_{R,u=0}^{-1}=\mathcal{F}_0^{-1}=\mathcal{F}_{{\rm GZ},u=0}^{-1}.
\end{gather*}
Therefore $\mathcal{F}_{R,u}^{-1}\equiv \mathcal{F}_{{\rm GZ},u}^{-1}$.

\subsection{Proof of the equality of the two twists}
\label{ssec:proofstar}

In the following proposition we state the conditions under which the two twists are equal, along with a simple proof.

\begin{proposition}Let $\mathcal{P}$ be the Poincar\'{e} Weyl algebra generated with momenta $p_\mu$, Lorentz gene\-rators $M_{\mu\nu}$ and dilatation $D$. Two twists $\mathcal{F}_1\in \mathcal{U}(\mathcal{P})\otimes \mathcal{U}(\mathcal{P})$ and $\mathcal{F}_2\in \mathcal{U}(\mathcal{P})\otimes \mathcal{U}(\mathcal{P})$ are identical if all the star products are identical, i.e., for all $f$ and $g$ in the Minkowski space time algebra,
\begin{gather*}
f*g= m\big( \mathcal{F}_1^{-1} (\triangleright \otimes \triangleright)(f\otimes g)\big) =m\big( \mathcal{F}_2^{-1} (\triangleright \otimes \triangleright)(f\otimes g)\big).
\end{gather*}\end{proposition}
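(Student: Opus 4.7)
The plan is to argue contrapositively: if $\mathcal{G}:=\mathcal{F}_1^{-1}-\mathcal{F}_2^{-1}$ is nonzero as an element of (the appropriate completion of) $\mathcal{U}(\mathcal{P})\otimes\mathcal{U}(\mathcal{P})$, then some pair $(f,g)$ will witness a difference between the two star products. It suffices to test the hypothesis on plane waves $f=e^{{\rm i}kx}$, $g=e^{{\rm i}qx}$ with arbitrary $k,q$, since plane waves detect every bidifferential operator with polynomial coefficients. So the goal is reduced to: if $m\bigl(\mathcal{G}(\triangleright\otimes\triangleright)(e^{{\rm i}kx}\otimes e^{{\rm i}qx})\bigr)=0$ for all $k$, $q$, then $\mathcal{G}=0$.

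Fix a PBW basis $\{X^\alpha\}$ of $\mathcal{U}(\mathcal{P})$ built from the generators $p_\mu$, $M_{\mu\nu}$, $D$, and write $\mathcal{G}=\sum_{\alpha,\beta}c_{\alpha\beta}\,X^\alpha\otimes X^\beta$. Each $X^\alpha$ acts as a differential operator with polynomial coefficients, so $X^\alpha\triangleright e^{{\rm i}kx}=P_\alpha(x,k)\,e^{{\rm i}kx}$ for a polynomial $P_\alpha$. Applying $m\circ(\triangleright\otimes\triangleright)$ to $e^{{\rm i}kx}\otimes e^{{\rm i}qx}$ and using the hypothesis gives
\begin{equation*}
\sum_{\alpha,\beta}c_{\alpha\beta}\,P_\alpha(x,k)\,P_\beta(x,q)\,e^{{\rm i}(k+q)\cdot x}\;=\;0\qquad\text{for all }x,k,q.
\end{equation*}

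The crucial step is the linear independence of $\{P_\alpha(x,k)\}_\alpha$ as polynomial functions in the $2n$ variables $(x,k)$, and likewise of $\{P_\beta(x,q)\}_\beta$. This is equivalent to the faithfulness of the representation $\triangleright$ of $\mathcal{U}(\mathcal{P})$ on polynomial functions: indeed, a nontrivial linear relation $\sum_\alpha d_\alpha P_\alpha(x,k)\equiv 0$ would mean $\sum_\alpha d_\alpha X^\alpha$ annihilates every plane wave and hence every polynomial, contradicting PBW-faithfulness. Once this independence is in hand, the displayed identity above becomes a polynomial identity in the independent variables $x$, $k$, $q$, and separating variables forces every coefficient $c_{\alpha\beta}$ to vanish. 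Hence $\mathcal{G}=0$ and $\mathcal{F}_1=\mathcal{F}_2$.

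The main obstacle is precisely the faithfulness/independence lemma, since everything else is bookkeeping. A clean proof goes via principal symbols: the associated graded map sends the PBW monomial $X^\alpha$ to a commutative polynomial in the symbols of $p_\mu$, $D$, $M_{\mu\nu}$ (monomials in $x$ and $\xi$), and these are manifestly linearly independent; linear independence then propagates from the top-order symbol to the full operators by a standard order-filtration argument. A mild technical point is that $\mathcal{G}$ in general lives in a $1/\kappa$-adic completion of $\mathcal{U}(\mathcal{P})\otimes\mathcal{U}(\mathcal{P})$, so the argument must be applied order by order in $1/\kappa$; this is harmless, since at each order only finitely many $c_{\alpha\beta}$ contribute to the test identity.
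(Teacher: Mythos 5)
There is a genuine gap at the ``separating variables'' step, and it sits exactly where the real content of the proposition lies. After reducing the hypothesis to the polynomial identity $\sum_{\alpha,\beta}c_{\alpha\beta}\,P_\alpha(x,k)\,P_\beta(x,q)=0$, you invoke only the linear independence of $\{P_\alpha(x,k)\}_\alpha$ as polynomials in $(x,k)$ and of $\{P_\beta(x,q)\}_\beta$ as polynomials in $(x,q)$. That is not sufficient, because the two factors share the variable $x$: linear independence of each family does not imply linear independence of the products $P_\alpha(x,k)P_\beta(x,q)$ in $\mathbb{C}[x,k,q]$. Indeed, the map $m\circ(\triangleright\otimes\triangleright)$ has a large kernel, namely (inside the full phase-space algebra) the right ideal $\mathcal{J}_0$ generated by the elements $x_\mu\otimes 1-1\otimes x_\mu$. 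Your argument uses nothing about $\mathcal{U}(\mathcal{P})$ beyond faithfulness of $\triangleright$, but faithfulness holds just as well for the full algebra of differential operators with polynomial coefficients, where the conclusion is false: the nonzero element $(x_1p_1)\otimes p_2-p_1\otimes(x_1p_2)=(x_1\otimes 1-1\otimes x_1)(p_1\otimes p_2)$ has symbol families $\{x_1k_1,\,k_1\}$ and $\{q_2,\,x_1q_2\}$ that are each linearly independent, yet $x_1k_1\cdot q_2-k_1\cdot x_1q_2\equiv 0$, so it yields the identically zero star product. Equivalently, to separate variables coefficientwise you would need the $P_\alpha(x,k)$ to be linearly independent over the ring $\mathbb{C}[x]$, not merely over $\mathbb{C}$, and they are not: for example the symbol of $D$ is ${\rm i}\sum_\mu x_\mu k_\mu$, a $\mathbb{C}[x]$-combination of the symbols $k_\mu$ of the $p_\mu$.

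So the crucial lemma is not PBW-faithfulness (which your symbol argument does prove) but the statement that $\mathcal{U}(\mathcal{P})\otimes\mathcal{U}(\mathcal{P})$ meets the kernel trivially, i.e., $\mathcal{J}_0\cap\mathcal{U}(\mathcal{P})\otimes\mathcal{U}(\mathcal{P})=0$; this is precisely what the paper's proof isolates and invokes (with references for both the identification of the kernel with $\mathcal{J}_0$ and the trivial intersection). To repair your proof you must insert an argument specific to the Poincar\'e--Weyl generators $p_\mu$, $M_{\mu\nu}$, $D$ at the point where you pass from the vanishing of $\sum_{\alpha,\beta}c_{\alpha\beta}P_\alpha(x,k)P_\beta(x,q)$ to the vanishing of the $c_{\alpha\beta}$; as it stands, that step is asserted rather than proved, and it is not ``bookkeeping'' --- it is the theorem.
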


\begin{proof} If all star products are the same, $\mathcal{F}_1^{-1}$ and $\mathcal{F}_2^{-1}$ could differ by an element in the right ideal $\mathcal{J}_0$ generated by the elements $(x_\mu \otimes 1- 1\otimes x_\mu)$ for all $\mu$~\cite{kov2,rina-PLA2013, IJMPA2014}. However, $\mathcal{J}_0\cap \mathcal{U}(\mathcal{P})\otimes \mathcal{U}(\mathcal{P})=0$, hence $\mathcal{F}_1=\mathcal{F}_2$.
\end{proof}

Since we already proved that the twists $\mathcal{F}_{R,u}$ and $\mathcal{F}_{{\rm GZ},u}$ give the same star products ${\rm e}^{{\rm i}kx} * {\rm e}^{{\rm i}qx}$, the twists $\mathcal{F}_{R,u}$ and $\mathcal{F}_{{\rm GZ},u}$ must be identical. Moreover, we have proved that the noncommutative coordinates $\hat{x}_\mu$ and twisted coproducts $\Delta p_\mu$ and $\Delta D$ from both twists are identical. Since $\mathcal{F}_{R,u}$ satisfies the normalization and 2-cocycle conditions, $\mathcal{F}_{{\rm GZ},u}$ also satisfies them.

\section{Conclusion}\label{sec:concl}

We have constructed a 1-parameter family $\mathcal{F}_{{\rm GZ},u}$~(\ref{eq:FGZu}) of Jordanian twists that interpolates between the simple Jordanian twists $\mathcal{F}_0$ and $\mathcal{F}_1$ defined in equation~(\ref{eq:F0F1}). We explicitly proved that $\mathcal{F}_{{\rm GZ},u}^{-1}$ satisfies the 2-cocycle condition~(\ref{eq:cocycFm}). For $u=\frac{1}{2}$, $\mathcal{F}_{{\rm GZ},u=\frac{1}{2}}$ coincides with $\mathcal{F}_{{\rm GZ}}$~\cite{GZ}. We have calculated the corresponding star product ${\rm e}^{{\rm i}kx}* {\rm e}^{{\rm i}qx}$~(\ref{eq:eikxstar}) and the corresponding deformed Hopf algebra structure. In Section~\ref{sec:cob}, we have presented another interpolation between Jordanian twists cohomologous to $\mathcal{F}_0$ via a 1-cochain depending on $u$~\cite{cobtw}. It is pointed out that $\mathcal{F}_{{\rm GZ},u}^{-1}$ and $\mathcal{F}_{R,u}^{-1}$ generate the same star product and the same deformed Hopf algebra. In Section~\ref{sec:equality}, a new result is presented that $\mathcal{F}^{-1}_{{\rm GZ},u} = \mathcal{F}^{-1}_{R,u}$, implying that $\mathcal{F}_{{\rm GZ},u}$ can be written in the form of a product of three exponential factors. Twist $\mathcal{F}_{R,u}$ automatically satisfies the 2-cocycle condition as it is obtained from a simple Jordanian twist by twisting by a 1-cochain~\cite{majid}. We note that for the twist $\mathcal{F}^{-1}_{{\rm GZ}}$~\cite{GZ}, the star product, an explicit form of the twist $\mathcal{F}_{{\rm GZ}}$ and the deformed Hopf algebra structure, were not known in the literature so far. Jordanian twists have been of interest in the recent literature~\cite{DJP_SIGMA2012,tongeren, kmps,PRD19,PRD-conformal,pv}. We note that our results could be useful in future applications of Jordanian twists.

\appendix
\section{Appendix}\label{appendixA}

\begin{lemma} If $x$, $y$, $z$ are mutually commuting variables, and $k$, $l$, $A$, $C$ with $A\leq k$, $C\leq l$ nonnegative integers, then
\begin{gather}
 \binom{x}{l-C}\sum_{k_1 = k-A}^k \binom{k_1}{k-A}\binom{z}{k_1} \binom{x+y-k-l+k_1+C}{C}\binom{y}{k-k_1}\nonumber\\
 \qquad {} =\binom{z}{k-A}\sum_{l_1 = l-C}^l \binom{l_1}{l-C}\binom{x}{l_1}\binom{y+z-k-l+l_1+A}{A}\binom{y}{l-l_1}.\label{eq:bigident}
\end{gather}
\end{lemma}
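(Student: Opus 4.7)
The plan is to reduce the identity to a clean Vandermonde-type statement and then verify it by two applications of Vandermonde's convolution separated by a trinomial rearrangement.

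First I would pull out the ``outer'' binomial coefficients. On both sides the sums start at $k_1 = k-A$ and $l_1 = l-C$, so setting $j = k_1 - (k-A)$ and $i = l_1 - (l-C)$ turns the ranges into $0 \leq j \leq A$ and $0 \leq i \leq C$. The subset-of-subset identity
\begin{gather*}
\binom{z}{k_1}\binom{k_1}{k-A} = \binom{z}{k-A}\binom{z-k+A}{j}, \qquad \binom{x}{l_1}\binom{l_1}{l-C} = \binom{x}{l-C}\binom{x-l+C}{i}
\end{gather*}
then lets us factor out $\binom{x}{l-C}\binom{z}{k-A}$ from each side. Writing $\alpha = z-k+A$, $\beta = x-l+C$, the remaining identity reduces to
\begin{gather*}
\sum_{j=0}^{A}\binom{\alpha}{j}\binom{y}{A-j}\binom{\beta+y-A+j}{C} = \sum_{i=0}^{C}\binom{\beta}{i}\binom{y}{C-i}\binom{\alpha+y-C+i}{A},
\end{gather*}
which is a clean symmetry statement under swapping $(\alpha,A)\leftrightarrow(\beta,C)$.

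Next I would prove this reduced identity by transforming the left-hand side into the right-hand side. Substituting $p = A-j$ rewrites the LHS as $\sum_{p=0}^{A}\binom{\alpha}{A-p}\binom{y}{p}\binom{\beta+y-p}{C}$. I then apply Vandermonde to expand $\binom{\beta+y-p}{C} = \sum_s \binom{\beta}{C-s}\binom{y-p}{s}$, and use the trinomial identity $\binom{y}{p}\binom{y-p}{s} = \binom{y}{s}\binom{y-s}{p}$ (both sides equal $y!/(p!\,s!\,(y-p-s)!)$). Interchanging the order of summation and applying Vandermonde once more in the form $\sum_p \binom{\alpha}{A-p}\binom{y-s}{p} = \binom{\alpha+y-s}{A}$ collapses the inner sum, producing $\sum_s \binom{\beta}{C-s}\binom{y}{s}\binom{\alpha+y-s}{A}$, which after the substitution $q = C-s$ is manifestly the RHS.

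There is no serious obstacle here; the whole calculation is a mechanical chain of standard binomial manipulations once the outer factors are removed. The only conceptual point that takes a moment to notice is the trinomial identity $\binom{y}{p}\binom{y-p}{s} = \binom{y}{s}\binom{y-s}{p}$, which is what allows the two Vandermonde convolutions to be decoupled and performed in sequence. The appendix should simply carry out the three moves above (two Vandermondes bracketing one trinomial exchange) and record that the result reproduces the right-hand side term by term.
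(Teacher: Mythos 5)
Your proof is correct and follows essentially the same route as the paper: the same change of summation indices and subset-of-subset factoring of $\binom{x}{l-C}\binom{z}{k-A}$, followed by Vandermonde expansions, with your trinomial exchange $\binom{y}{p}\binom{y-p}{s}=\binom{y}{s}\binom{y-s}{p}$ being exactly the falling-factorial identity the paper uses to match summands. The only difference is organizational: the paper expands one binomial on each side and compares the resulting double sums term by term, whereas you expand only the left-hand side and re-collapse it with a second Vandermonde convolution to land directly on the right-hand side.
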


\begin{proof} To make the proof more transparent, we make a change of summation indices
\begin{gather*}
 i = k_1 - k + A,\qquad j= l_1- l + C,
\end{gather*}
hence $k_1 = i + k - A$ and $l_1 = j + l - C$, to restate equation~(\ref{eq:bigident}) as
\begin{gather}
\binom{x}{l-C}\sum_{i = 0}^A \binom{i+k-A}{k-A}\binom{z}{i+k-A}\binom{x+y-l+i-A+C}{C}\binom{y}{A-i}\nonumber\\
\qquad{} = \binom{z}{k-A}\sum_{j = 0}^C \binom{j+l-C}{l-C}\binom{x}{j+l-C}\binom{y+z-l+j-C+A}{A}\binom{y}{C-j}.\label{eq:befm}
\end{gather}

We remind the reader of the simple identity
\begin{gather*}
 \binom{r}{s}\binom{w}{r} = \binom{w}{s}\binom{w-s}{r-s},
\end{gather*}
which we apply in~(\ref{eq:befm}) for $w = z$ on the left and for $w = x$ on the right, to obtain an equivalent statement,
\begin{gather}
 \binom{x}{l-C}\binom{z}{k-A}\sum_{i = 0}^A\binom{z-k+A}{i} \binom{x+y-l+i-A+C}{C}\binom{y}{A-i}\nonumber\\
 \qquad{} =\binom{z}{k-A}\binom{x}{l-C}\sum_{j = 0}^C\binom{x}{j+l-C}\binom{y+z-k+j-C+A}{A}\binom{y}{C-j}.\label{eq:final}
\end{gather}
We expand
\begin{gather*}
\binom{x+y-l+i-A + C}{C} = \sum_{j = 0}^C\binom{x-l+C}{j}\binom{y+i-A}{C-j}
\end{gather*}
on the left-hand side, and
\begin{gather*}
\binom{y+z-k+j-C+A}{A} = \sum_{i = 0}^A\binom{z-k+A}{i}\binom{y+j-C}{A-i}
\end{gather*}
on the right-hand side of~(\ref{eq:final}). Now both sides involve double summation over $i$ and $j$. For each fixed pair $(i,j)$, compare the corresponding summands on the two sides. The factors involving $x$ and $z$ are identical on both sides. It remains to check that the factors involving $y$ agree. Indeed, by definition,
\begin{gather*}
 \binom{y + i - A}{C-j}\binom{y}{A-i} = \frac{y(y-1)\cdots (y+i-A+j-C+1)}{(C-j)! (A-i)!} = \binom{y + j - C}{A-i}\binom{y}{C-j}.\!\!\!\!\tag*{\qed}
\end{gather*}\renewcommand{\qed}{}
\end{proof}

\subsection*{Acknowledgements}
We thank Anna Pacho{\l} for useful discussions. Z.\v{S}.~has been partly supported by the Croatian Science Foundation under the Project ``New Geometries for Gravity and Spacetime'' (IP-2018-01-7615) and by the grant 18-00496S of the Czech Science Foundation.

\pdfbookmark[1]{References}{ref}
\LastPageEnding

\end{document}